\crefname{theorem}{Thm.}{Thms.}
\crefname{lemma}{Lem.}{Lemmas}
\crefname{corollary}{Cor.}{Cors.}
\crefname{figure}{Fig.}{Figs.}
\crefname{definition}{Defn.}{Defns.}
\crefname{table}{Tab.}{Tabs.}
\crefname{example}{Ex.}{Exs.}
\crefname{item}{item}{items}
\crefname{footnote}{footnote}{footnotes}
\crefname{observation}{Obs.}{Obs.}
\crefname{remark}{Remark}{Remarks}
\crefname{proposition}{Prop.}{Props.}
\crefname{equation}{Eqn.}{Eqns.}
\crefname{counterexample}{Counterexample}{Counterexamples}
\crefname{property}{Property}{Properties}
\crefname{algorithm}{Algorithm}{Algorithms}
\newcommand{\pfuzz}{\textsc{PathFuzzing}}
\setlist{nosep,leftmargin=\parindent}
\begin{document}

\title{\pfuzz{}: Worst Case Analysis by Fuzzing Symbolic-Execution Paths}


\author{Zimu Chen}
\affiliation{%
  \institution{Peking University}
  \city{Beijing}
  \country{China}}
  
\author{Di Wang}
\affiliation{%
  \institution{Peking University}
  \city{Beijing}
  \country{China}}

\renewcommand{\shortauthors}{Zimu et al.}

\begin{abstract}
Estimating worst-case resource consumption is a critical task in software development.
The worst-case analysis (WCA) problem is an optimization-based abstraction of this task.
Fuzzing and symbolic execution are widely used techniques for addressing the WCA problem.
However, improving code coverage in fuzzing or managing path explosion in symbolic execution within the context of WCA poses significant challenges. 

In this paper, we propose \pfuzz{}, aiming to combine the strengths of both techniques to design a WCA method.
The key idea is to transform a program into a symbolic one that takes an execution path (encoded as a binary string) and interprets the bits as branch decisions.
\pfuzz{} then applies evolutionary fuzzing techniques to the transformed program to search for binary strings that represent satisfiable path conditions and lead to high resource consumption.
We evaluate the performance of \pfuzz{} experimentally on a benchmark suite that consists of prior work's benchmarks and some added by us.
Results show that \pfuzz{} generally outperforms a fuzzing and a symbolic-execution baseline.
\end{abstract}

\begin{CCSXML}
<ccs2012>
 <concept>
  <concept_id>00000000.0000000.0000000</concept_id>
  <concept_desc>Do Not Use This Code, Generate the Correct Terms for Your Paper</concept_desc>
  <concept_significance>500</concept_significance>
 </concept>
 <concept>
  <concept_id>00000000.00000000.00000000</concept_id>
  <concept_desc>Do Not Use This Code, Generate the Correct Terms for Your Paper</concept_desc>
  <concept_significance>300</concept_significance>
 </concept>
 <concept>
  <concept_id>00000000.00000000.00000000</concept_id>
  <concept_desc>Do Not Use This Code, Generate the Correct Terms for Your Paper</concept_desc>
  <concept_significance>100</concept_significance>
 </concept>
 <concept>
  <concept_id>00000000.00000000.00000000</concept_id>
  <concept_desc>Do Not Use This Code, Generate the Correct Terms for Your Paper</concept_desc>
  <concept_significance>100</concept_significance>
 </concept>
</ccs2012>
\end{CCSXML}

\ccsdesc[500]{Do Not Use This Code~Generate the Correct Terms for Your Paper}
\ccsdesc[300]{Do Not Use This Code~Generate the Correct Terms for Your Paper}
\ccsdesc{Do Not Use This Code~Generate the Correct Terms for Your Paper}
\ccsdesc[100]{Do Not Use This Code~Generate the Correct Terms for Your Paper}



\maketitle

\section{Introduction}
\label{sec:intro}

The \emph{worst-case-anlaysis} (WCA) problem is a widely-studied problem abstraction of the resource-consumption analysis. Formally, given a program $P$, an input space $I_P$, and a cost metric $\mathrm{cost(\cdot)}:I_P\rightarrow \mathbb R ^+$, the goal of WCA is to find a specific input $\alpha$ that maximizes (often approximately) the resource consumption $\mathrm{cost}(\alpha)$.
Multiple methods have been used in designing WCA problem solvers. Two common methodologies are fuzzing and symbolic execution:
%
%
\begin{itemize} 
\item
\emph{Fuzz Testing} is a widely adopted methodology in software testing.
When applying fuzz testing to solve WCA, the main idea is to iteratively generate new inputs from existing ones, evaluate the program's resource consumption under these inputs, and keep candidates that may lead to high resource consumption according to some heuristics.
In this category, evolutionary fuzzing is one commonly used method for solving WCA.

\item
\emph{Symbolic Execution} is an analysis methodology that uses symbolic values as inputs to execute the program. 
When applying symbolic execution to solve WCA, the main idea is to enumerate the program's different execution paths with some search strategies.
When the number of different paths is much less than the number of possible inputs, symbolic execution usually provides an efficient solution to WCA.
\end{itemize}

In general, fuzz testing is good at recognizing patterns of target inputs but may fail to locate corner-case inputs with high resource consumption.
On the other hand, symbolic execution is good at finding corner cases but may be ineffective due to the path explosion issue.
Recently, hybrid methods that integrate fuzz testing and symbolic execution have been used for WCA problems, aiming to combine the advantages of both methods. 
\emph{Badger}~\cite{ISSTA:NKP18} has provided a hybrid WCA method that runs a symbolic-execution engine and a fuzzing engine in parallel on the same program while continuously exchanging found inputs between the two engines.
In some sense, \emph{Badger} serves as a framework to make different optimizers cooperate; it does not aim to advance either the fuzzing or the symbolic execution part.
%
%
\emph{ESE}~\cite{ISSRE:ADS18} has proposed a new evolutionary-fuzzing scheme that treats the path-condition sequences of a program as its search space.
During each evolutionary iteration, \emph{ESE} runs symbolic execution with respect to a candidate path-condition sequence; if some branch condition cannot be resolved, \emph{ESE} selects a branch randomly.
However, performing operations (such as mutation and crossover) on the path-condition sequences may break the dependencies between symbolic values, e.g., a symbolic value in a conditional statement takes some value only if another symbolic value takes some value that leads to a corresponding branch.
%
%



%
In this paper, we propose \pfuzz{}, a new evolutionary-fuzzing scheme that integrates with symbolic execution to solve WCA.
Our basic approach follows that of \emph{ESE}: perform fuzzing in the search space of program paths instead of the original input space.
Our key idea is to \emph{transform the program into a symbolic one that takes an execution path as its input}.
In this way, applying evolutionary fuzzing to the transformed program immediately yields a WCA solver that integrates fuzzing with symbolic execution.
Different from \emph{ESE}, we do not use path-condition sequences to encode execution paths; instead, we use \emph{binary strings} to encode them, each bit of which represents a branch selection.
Our encoding has two benefits: (i) it avoids the issue of symbolic-value dependencies, and the symbolic-execution engine can easily reconstruct a path-condition sequence from the binary string; and (ii) it simplifies the representation of the search space, and provides the potential for the evolutionary algorithms to identify patterns in terms of branch selections.
Based on this insight, we develop some optimization heuristics to improve the performance of \pfuzz{}.
When the symbolic execution follows a binary string to execute a program, if it finds out that a branch is determined (e.g., by checking if current path conditions yield a tautology), it does not need to consume a bit from the binary string.
Intuitively, this optimization allows \pfuzz{} to exploit each candidate binary string as much as possible.
We also consider the problem of reducing the number of satisfiability checks, which are the common performance bottleneck when combining fuzzing and symbolic execution.
We present a preliminary finding that for some programs in which branch selections are independent of symbolic inputs, \pfuzz{} can largely reduce the number of satisfiability checks.


The main contribution of our work is as follows:

\begin{itemize}
    \item 
    We present an encoding of program-execution paths as binary strings and a transformation from a program to its symbolic variant that takes such binary strings as inputs.
    \item
    We propose \pfuzz{}, a hybrid method addressing the WCA problem that combines evolutionary fuzzing and symbolic execution based on the binary-string encoding of the search space.
    \item We have conducted experiments to evaluate the method. Results show that \pfuzz{} generally outperforms a fuzzing baseline and a symbolic-execution baseline and has an advantage on certain programs where the optimization heuristics work.
    
\end{itemize}


\section{Motivation}
\label{sec:motivation}

In this section, we provide a motivating example by the QuickSort program below:
 
\begin{algorithm}[!h]
    \caption{QuickSort (Simple Impl.) (N)}
    \label{alg:AOA}
    \renewcommand{\algorithmicrequire}{\textbf{Input:}}
    \renewcommand{\algorithmicensure}{\textbf{Output:}}
    \begin{algorithmic}[1]
        \REQUIRE $N, A: \texttt{Array}[\texttt{int}[1, N]] \ \mathrm{where} |A| = N$

        \IF{$N \le 1$}
            \RETURN $A$
        \ENDIF

        \STATE AddCost($N$)

        \STATE $pivot \leftarrow A[0]$
        \STATE $left, mid, right \leftarrow [], [A[0]], []$

        \FOR{$i \in 1 \dots N-1$}
            \IF{$A[i] < pivot$}
                \STATE $left \leftarrow left + [A[i]]$
            \ELSIF{$A[i] = pivot$}
                \STATE $mid \leftarrow mid + [A[i]]$
            \ELSE
                \STATE $right \leftarrow right + [A[i]]$
            \ENDIF
        \ENDFOR

        \RETURN Quicksort($|left|, left$) $+ mid +$ Quicksort($|right|, right$)
        
    \end{algorithmic}
\end{algorithm}

In the program above, AddCost($N$) indicates the cost metric of the program: a call to function Quicksort with an input array of size $N$ costs $N$ unit amount of time for $N \ge 2$ (does not include the recursive call part). This value has a linear relationship to the actual running time and may be considered an abstraction of it.

Given the program, we want to know for a fixed $N$, what input array $A$ maximizes the cost metric of the program. Thus, the setup of symbolic execution is $N=N_0, A=[a_0, \dots, a_{n-1}]$, where $N_0$ is a fixed concrete value like $64$ that does not change during the entire optimization process, and $a_0, \dots, a_{n-1}$ are symbolic values that we want to optimize.

We first show some algorithmic properties about this time-complexity vulnerable QuickSort program. The worst case complexity $T(n) = T(n-1) + O(n) = O(n^2)$ can be triggered by having all non-pivot elements put in one side ($left$ or $right$) in every recursive call. Furthermore, without pivot randomization, we know that the first element is the pivot, so we can simply let all other elements be less than the first element or let all other elements be greater than the first element in a recursive call.

The worst case input easiest to find is the decreasing or increasing array: $[1, 2, \dots, N]$ or $[N, N-1,\dots, 1]$. There are other worst case inputs like $[1, N, N-1, \dots, 2]$, or sufficiently, every element in the array must be greater (or less) than all succeeding elements.

\paragraph{Path Strings} To show how \pfuzz{} solves this problem efficiently, let's first consider representing a program execution path with a binary string.

Assume we have an extra dynamic array $path$ in program state.
For every branch in the program (the two \textbf{if} branches), if the branch is taken (path condition evaluates to $\texttt{True}$), then append a $\texttt{1}$ to $path$, otherwise, append a $\texttt{0}$ to $path$. 

Notice that the \textbf{for} loop, which should be replaced by an equivalent \textbf{while} loop to avoid using advanced language features like iterators, also has a branch with path condition $i \le N-1$. 

This branch, however, is not considered (do not add a bit after evaluating loop condition) here since $N$ is a concrete value; the for loop could actually be unrolled.

For example, input $N=8,A=[3,1,4,5,3,2,2,3]$ has the following path string in the first function call (recursive calls not included) to QuickSort (we use space to separate branches encountered in each loop):

$$
\texttt{1 00 00 01 1 1 01}
$$

In each iteration, path strings $\texttt{1}, \texttt{01}, \texttt{00}$ represent that the current value is less than, equal to, or greater than the pivot, respectively.

\paragraph{Program Transformation} We also notice that inputs with the same path string have the same control flow, thus the same cost, as at statement AddCost($N$), the value of $N$ is always a concrete value. The cost of a path string can be obtained by a symbolic execution with this path string, in which each bit of the path string guides the symbolic executor at at branch, whether it should take the branch or not.

Some path strings, however, do not represent the control flow of any program input since the path conditions on that path are unsatisfiable. This can be checked using an SMT solver. In this work, we use Z3\cite{TACAS:Z308}, a high-performance SMT solver developed by Microsoft Research that supports automated reasoning with arithmetic, bit vectors, and other theories. Unsatisfiable path strings should be discarded since they do not help in finding worst-case input.

At this point, the symbolic executor can be considered a new program $P'$ with cost metric, whose input set is the set of binary strings $\{\texttt{0},\texttt{1}\}^M$. To find the worst case input, we can instead find the worst case path string, which is equivalent to the WCA problem on $P'$.

\paragraph{Evolutionary Algorithm} Then, \pfuzz{} utilizes fuzzing technique on program $P'$, using an evolutionary algorithm specifically designed for programs with binary string as input to find the worst case input for $P'$.

According to our previous observation, a worst case path (in a function call) should be either $\texttt{(1)}^{N-1}$ (repeating $\texttt{1}$ for $N-1$ times, same for below), i.e. all other elements are less than pivot, or $\texttt{(00)}^{N-1}$, i.e. all other elements are greater than pivot. For the whole program execution (including recursive calls), all possible paths of the worst-case input are as follows:

$$
s_1^{N-1} s_2^{N-2} \dots s_{N-1}^1  \ \ \  \forall i =1\dots N-1, s_i \in \{\texttt{1}, \texttt{00}\}
$$

For the two simplest forms of worst case input, i.e. the ascending sequence and the descending sequence, their path string is $\texttt{(00)}^{N(N-1)/2}$ and $\texttt{(1)}^{N(N-1)/2}$ respectively. 

An evolutionary algorithm could easily find such path strings consisting of the same value on each bit. Roughly speaking, the algorithm may find that the more $\texttt{1}$ there is, the higher the cost of the path. The crossover operator in the evolutionary algorithm may combine two paths where $\texttt{1}$ appears at high frequency and generates another with an even higher, eventually resulting in the all $\texttt{1}$ string. Notice that the algorithm does not need to find the string with exactly length $N(N-1)/2$: any string with prefix $\texttt{(1)}^{N(N-1)/2}$ is OK.

\paragraph{Input Model Solving} Finally, with worst case path and path conditions on that path, we can use SMT solver to find a model for the set of path conditions. As guaranteed before, it is satisfiable and we will find an input, which is the worst case input we're looking for.

\section{Method}
\label{sec:method}

In this section, we introduce the general framework of our method, \pfuzz{}.
We use the following symbols for the program we aim to analyze and its related properties:
\begin{itemize}
    \item We use symbols like $P, P'$ to represent a program. 
    
    \item Set $I_P$ represents the input space of program $P$. $\alpha \in I_P$ represents an exact input of the program.

    \item Program $P$ consists of a set of statements $\{s_1, s_2, \dots, s_n\}$. For simplicity, let $s_1$ always be the entry point for a program. 
    
    \item Each statement $s_i$ has some successor statements, represented by a set of statements $\mathrm{OUT}(s_i)$, where $|\mathrm{OUT}(s_i)| \in \{0,1,2\}$

    \item The cost metric is defined by a function $\mathrm{cost}_P(stmt, state)$, representing the amount of resource consumption that should be added after executing statement $stmt$ with program state $state$.
    
    \item The function $\mathrm{eval}_P(\alpha)$ represents the cost (resource consumption) of running program $P$ with input $\alpha$.
\end{itemize}

These considerations give a simple ``basic block'' model to the programs.
We assume that the program we want to analyze does not contain I/O operations, multi-threads, exceptions, or other advanced language features.
We will introduce some strategies in the experiment sections (\cref{sec:evaluation-setup,sec:evaluation-result}) to make our method compatible with complex real-world programs.

\subsection{Program Path}
\label{sec:program-path}

We first give a more formal definition to \emph{Execution Path}, which is a model of actual program executions. Then, we introduce \emph{Path String}, a data structure implicitly representing an execution path and directly used in fuzzing.

\subsubsection{Execution Path}

\begin{definition}[Execution Path]
     An execution path is a sequence of statements $T = \{t_1, t_2, \dots, t_l\}$ that:
\begin{itemize}
    \item $t_1$ is the entry statement of the program,
    \item $\forall 1 < i \le l, t_i \in \mathrm{OUT}(t_{i-1})$, and
    \item $|\mathrm{OUT}(t_l)| = 0$. 
\end{itemize}
\end{definition}

The second condition ensures that the path follows some valid control flow sequence of the program $P$.
The third condition guarantees that the sequence reflects a terminated (not on-the-way) program execution.

For a program $P$, if we run it with input $\alpha \in I_P$ and the program executes through the execution path $T$, then we say that:

\begin{itemize}
    \item The input $\alpha$ \textbf{is related to} the execution path $T$.
    \item The execution path $T$ is \textbf{satisfiable}. We use the notation $\mathrm{sat}(T) = \texttt{True}$ to represent it.
\end{itemize}

\noindent Obviously, an input $\alpha$ is exactly related to one execution path; we use notation $T(\alpha)$ to represent that path.
However, an execution path $T$ is related to either none, one, or more inputs.
We use $I_P(T)$ to represent the set of that input.

For execution path $T$, if no input $\alpha$ is related to it, then we say it's \textbf{unsatisfiable} and use the notation $\mathrm{sat}(T) = \texttt{False}$ to represent it. This is the case when the path conditions on that path cannot be satisfied. For example:

\begin{algorithm}[!h]
    \caption{Unsatisfiability Example}
    \renewcommand{\algorithmicrequire}{\textbf{Input:}}
    \renewcommand{\algorithmicensure}{\textbf{Output:}}
    \begin{algorithmic}[3]
        \REQUIRE $X: \texttt{int}$

        \IF {$X > 0$}
            \IF {$X < 0$}
                \RETURN
            \ENDIF
        \ENDIF

    \end{algorithmic}
\end{algorithm}

The program has three distinct execution paths.
However, the path that takes the $ \texttt{True}$ branch for both $\mathbf{if}$ statements is unsatisfiable, since no $X$ satisfies $X > 0 \wedge X < 0$.

Here, we define some other properties of execution paths:

\begin{definition}[Cost of Input and Execution Path]
     For Program $P$, input set $I_P$, and cost metric $\mathrm{cost}_P(stmt,state)$, the cost of an input $\alpha \in I_P(T)$ is
     $$
     \mathrm{cost}(\alpha) = \sum_{i=1}^l \mathrm{cost}(t_i, state(\alpha, i)) ,
     $$
     where $state(\alpha, i)$ is the program state where the program $P$ is executed with input $\alpha$, and has executed through statements $t_1, \dots ,t_{i-1}$.
     The cost of execution path $T$ is the maximum cost among all possible inputs:
     $$
     \mathrm{cost}(T) = \max_{\alpha \in I_P(T)} \mathrm{cost}(\alpha)
     $$
\end{definition}

In the above definition, if $\mathrm{cost}$ is constant, then $\mathrm{cost}(\alpha)$ has the same value for all $\alpha \in I_P(T)$ since $\mathrm{cost}$ has nothing to do with $state(\alpha, i)$.
Thus, the definition can be simplified to: 
$$
\mathrm{cost}(T) = \sum_{i=1}^l \mathrm{cost}(t_i)
$$
This indicates that given a specific execution path, we can use symbolic execution to find the cost of that path without knowing a specific input.
It becomes a little bit more complex when $\mathrm{cost}$ is variable. ]
In this case, we can also use symbolic execution to find some $\mathrm{cost}'(T)$, however it will be a function (or practically, a symbolic expression) of input variables (in the form like: $n^2 + 2m + 10$ where $n,m$ are input variables).
We can use an SMT solver to find the maximum value among all inputs $\alpha \in I_P(T)$, then find the concrete value of $\mathrm{cost}(T)$.

\subsubsection{Path String}

On an execution path $T$, for all $i$, if $|\mathrm{OUT}(t_i)| = 1$, then we immediately know $t_{i+1}$. The only thing we care about is which side is taken on each branch operation. According to this intuition, we define the path string, a simple data structure that indirectly reflects an execution path.

\begin{definition}[Path String]
     A path string is a binary string $q = q_1q_2 \dots q_m, q_i \in \{0,1\}$.
\end{definition}

A path string itself is just a data structure and has nothing to do with the program. We need a mapping function $T(q) = t_1, t_2, \ldots, t_l$ to define which actual path is represented by this path string.
Intuitively, we have a brief idea: $q_1, q_2, \ldots$ means whether the 1st, 2nd, \ldots branches are taken. Formally, we define the mapping function that defines which execution path is represented by a path string.

\begin{definition}[Default Path String Mapping]
     For program $P$, the default path string mapping $T_d(q) = t_1, t_2, \ldots, t_l$ is defined as follows:
     
    \begin{itemize}
        \item $t_1 = c_1$.
        \item Iteratively, $\forall i \ge 2:$
        \begin{itemize}
            \item If $|\mathrm{OUT}(t_{i-1})| = 0$, then $t_i = \mathrm{NULL}$. (No such statement, the program terminated after statement $i-1$.)
            \item If $|\mathrm{OUT}(t_{i-1})| = 1$, then $t_i = \mathrm{OUT}(t_{i-1})_1$.
            \item If $|\mathrm{OUT}(t_{i-1})| = 2$, then $$
            t_i = \begin{cases}
            \mathrm{OUT}(t_{i-1})_1 & \text{if } q_{x} = 0, \\
            \mathrm{OUT}(t_{i-1})_2 & \text{if } q_{x} = 1,
            \end{cases}
            $$
            where $x = 1 + \sum_{j=1}^{i-1}[|\mathrm{OUT}(t_j)| = 2]$.
        \end{itemize}
    \end{itemize}
    
\end{definition}

\noindent The ``where'' clause in the above definition requires $x \le |q|$, i.e., the path string should never be not long enough so that there are no more bits to use while the program is not yet terminated.
Intuitively, if $|q|$ is sufficiently large, then there would be no problem. We ignore this requirement here, and a later condition will ensure that this condition is always true.

It should be noted that we will provide other mapping functions in the optimization section.
In the following part, we consider that we are using a certain mapping function $T(q)$.

\begin{definition}[Properties of a Path String]
    Given program $P$, input set $I_P$, cost metric $\mathrm{cost}_P(stmt,state)$ and mapping function $T(q)$, for some path string $q$:
    \begin{itemize}
        \item $q$ is satisfiable if $T(q)$ is satisfiable. Otherwise, we say that $q$ is unsatisfiable.
        \item If $q$ is satisfiable, then the cost of $q$ is defined as $\mathrm{cost}(q) = \mathrm{cost}(T(q))$.
        \item Two strings $q_1, q_2$ are considered equivalent if and only if $T(q_1) = T(q_2)$. This defines an equivalence relation on path strings.
        \item If a input $\alpha$ is related to $T(q)$, then we also say it is related to $q$ (actually, the equivalence class where $q$ is in). Let $I_P(q) = I_P(T(q))$, and $Q(\alpha) = \{q:\alpha \in T(q)\}$.
    \end{itemize}
\end{definition}

At the end of this section, for convenience, we show that we can fix the length of path strings to some certain threshold $M$ without loss of expressiveness of path strings.

\begin{theorem}[Path String Length Threshold]
     For program $P$, input set $I_P$ and mapping function $T(q)$, there exists a positive integer $M$, so that 
     $$
     \bigcup_{q \in \{0,1\}^M} I_P(q) = I_P
     $$
\end{theorem}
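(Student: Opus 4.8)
The plan is to exhibit a concrete $M$ as the maximum, over all inputs, of the number of two-way branches encountered along each input's execution path, and then argue that length-$M$ strings suffice to recover every input's path.

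First I would fix notation. For an input $\alpha$, let $T(\alpha) = t_1, \ldots, t_l$ be its (unique) execution path, and let $b(\alpha) = \sum_{i=1}^{l} [|\mathrm{OUT}(t_i)| = 2]$ count the branch statements along it. By the Default Path String Mapping, reconstructing $T(\alpha)$ from a path string consumes exactly one bit per such branch and no bits elsewhere, and the consumed positions are precisely $1, 2, \ldots, b(\alpha)$ since the index $x$ increments by one at each branch. Hence any string $q$ whose first $b(\alpha)$ bits agree with $\alpha$'s branch decisions yields $T(q) = T(\alpha)$, so that $\alpha \in I_P(q)$. Crucially, once the program reaches a terminal statement (the case $|\mathrm{OUT}(t_l)| = 0$) no further bits are read, so bits beyond position $b(\alpha)$ are irrelevant and may be padded arbitrarily.

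Next comes the core step: establish that $M := \sup_{\alpha \in I_P} b(\alpha)$ is finite, and set the threshold to this value. Two facts deliver finiteness in our setting. Termination---a standing assumption for the WCA problem to be well-posed---ensures each $l$, and therefore each $b(\alpha)$, is finite. Finiteness of the set of distinct execution paths then upgrades this pointwise finiteness to a uniform bound: because the concrete parameters (such as the array length $N$ in the motivating example) are fixed during symbolic execution, the input space $I_P$ is finite, so $\{T(\alpha) : \alpha \in I_P\}$ is a finite collection of finite paths and $\max_\alpha b(\alpha)$ is attained. With $M$ equal to this maximum, the set equality follows quickly: the inclusion $\bigcup_{q \in \{0,1\}^M} I_P(q) \subseteq I_P$ is immediate since each $I_P(q) \subseteq I_P$, and for the reverse inclusion any $\alpha$ lies in $I_P(q)$ for the length-$M$ string $q$ recording its $b(\alpha) \le M$ branch decisions and padded elsewhere. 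This simultaneously discharges the side condition flagged after the mapping definition---that a path string never runs short of bits---since with length exactly $\max_\alpha b(\alpha)$ every execution has enough bits to consume.

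I expect the main obstacle to be the finiteness of $M$ rather than the set-theoretic bookkeeping. The statement is genuinely false for programs whose execution-path length grows without bound over the input space (for instance, a loop iterating $\alpha$ times on an unbounded integer input $\alpha$): there termination holds pointwise yet no single $M$ works. I would therefore make explicit the hypothesis that rescues the claim---either a finite input space, or more generally a uniform bound on the number of branch points per execution---and observe that the fixed concrete loop bounds of the transformed symbolic program are exactly what supply this bound in the intended regime.
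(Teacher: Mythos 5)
Your construction is essentially the paper's own: the paper sets $M = \max_{\alpha\in I_P} \min_{q \in Q(\alpha)} |q|$ and pads shorter strings with \texttt{0}s, and under the default mapping your branch count $b(\alpha)$ is exactly $\min_{q\in Q(\alpha)}|q|$, since one bit is consumed per two-way branch and none elsewhere, so the two proofs coincide in substance. Where you genuinely go beyond the paper is the finiteness analysis: the paper's proof simply writes $\max_{\alpha\in I_P}$ and silently assumes this maximum exists, whereas you correctly observe that for an infinite input space with unboundedly long executions (a loop iterating $\alpha$ times on an unbounded integer $\alpha$) no finite $M$ works, and you make explicit the hypothesis --- a finite input space, or a uniform bound on branch points per execution, as supplied by the fixed concrete scale parameters in the intended WCA setting --- that the theorem actually needs. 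This is a real gap in the paper's argument that your write-up repairs rather than a flaw in yours. Two minor points: your argument is phrased specifically for the default mapping $T_d$ (via "consumed positions are precisely $1,\dots,b(\alpha)$"), while the theorem quantifies over an arbitrary mapping $T(q)$; the paper's formulation $\min_{q\in Q(\alpha)}|q|$ is mapping-agnostic, though your $b(\alpha)$ still upper-bounds the consumed bits under the optimized mappings $T_c$ and $T_u$, so your $M$ remains valid there. Also, "padded arbitrarily" is fine but worth one line of justification (bits beyond the last consumed position are never read, so any padding yields an equivalent path string), which is the same equivalence the paper invokes when appending \texttt{0}s.
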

\begin{proof}
    Let $M = \max_{\alpha\in I_P} \min_{q \in Q(\alpha)} |q|$. For all $\alpha \in I_P$, there exists a path string $q \in Q(\alpha)$, where $|q| \le M$. It's obvious that by appending certain number of $\texttt{0}$ after $q$, we get an equivalent path string $q'$, and we have $q' \in \{0,1\}^M$ and $\alpha \in I_P(q')$. This shows that such $M$ ensures the condition required in the theorem.
\end{proof}

With $M$ defined, it's simple to prove that the additional requirement in the definition of path mapping is always satisfied with condition $|q| = M$ appended.

\subsection{Problem Conversion}
\label{sec:problem-conv}

With the definition of path strings and related properties and operations, we can finally propose the idea of how to convert the problem.
First, we give a definition to the new problem:

\begin{definition}[Worst Case Path Analysis]
    Given a program $P$, its input space $I_P$, a path string length threshold $M$ and the cost metric $\mathrm{cost(\cdot)}:I_P\rightarrow \mathbb R ^+$, Find a specific path string $q$ that is satisfiable and maximizes the resource consumption of any input related to that path. Formally $\mathrm{cost}(q)$. Formally, find:
    $$
    \arg \max_{q \in \{0,1\}^{M}, \mathrm{sat}(q)} \mathrm{cost}(q)
    $$
\end{definition}

In the part where we have defined the cost of an execution path, we showed that the evaluation of $\mathrm{cost}(q)$ could be done by symbolically executing the program.
The value of $\mathrm{sat}(q)$ could be found by using an SMT solver to check the satisfiability of the conjunction of all path conditions on that path.
Thus, we can use certain search strategies to search the space $\{0,1\}^{M}$ and find the (approximately) maximum cost and the path $q$, while in this work, the evolutionary algorithm will become the main search method.
After $q$ is found, we can use an SMT solver to find a concrete input $\alpha$ that satisfies all path conditions on $T(q)$, which means $\alpha \in I_P(q)$ and is the worst case input we are looking for; however, if $\mathrm{cost}$ is variable, we instead need to find the $\alpha$ that maximizes $\mathrm{cost}(\alpha)$, which is harder but also solvable by an SMT solve.

On an alternate view, to stress the use of fuzzing techniques (which requires the objects to be optimized be programs, not just functions), we show that an alternate form of problem conversion, which is converting program $P$ to $P'$ defined below:

\begin{algorithm}[!h]
    \caption{$P'$: The Converted Program from $P$}
    \renewcommand{\algorithmicrequire}{\textbf{Input:}}
    \renewcommand{\algorithmicensure}{\textbf{Output:}}
    \begin{algorithmic}[1]
        \REQUIRE $q: \{0,1\}^M$
        \STATE Symbolically execute program $P$, get $\mathrm{sat(q)}, \mathrm{cost}(q)$
        \IF{$\mathrm{sat}(q)$}
            \STATE AddCost($\mathrm{cost}(q)$)
        \ELSE
            \STATE ExitWithError()
        \ENDIF
        
    \end{algorithmic}
\end{algorithm}

It's clear that to find the worst case (and non-error) input of $P'$ is to find the worst case path string in $P$.
Thus, any common fuzzing techniques could be used.

But why would we do fuzzing on this transformed program instead of fuzzing the original program $P$ itself? Intuitively, two properties of the input space of the new program make it simpler for analysis:
\begin{itemize}
    \item
    For most of the programs, the input space of program $P'$ ($\{0,1\}^M$) is smaller than the input space of program $P$.
    This inherits the benefit from symbolic execution based methods. 
    Although the size of the input space may still be too large, the evolutionary algorithm may utilize this size reduction and explore the input space.
    \item 
    Different programs' inputs may contain variables of different types and shapes (arrays of different dimensions, etc.).
    Designing fuzzing techniques on program $P$ requires the system to be adaptable to different kinds of input space. However, the input of program $P'$ is always binary strings of a fixed length.
    Thus, we can design fuzzing techniques based on this shape of input space.
\end{itemize}

\subsection{Evolutionary Algorithm}
\label{sec:evo-algo}

In this section, we will introduce the fuzzing technique used in this work and focus on designing the evolutionary algorithm.

The target of fuzzing is program $P'$, whose input space is $q \in \{0,1\}^M$, and the output contains $\mathrm{sat}(q), \mathrm{cost}(q)$.
This is a constrained optimization problem, which means we need to find $q$ that first satisfies $\mathrm{sat}(q)$ and then maximizes $\mathrm{cost}(q)$.

It should be noted that although fuzzing is classically a black-box method, which means we know no detail of program $P'$.
In this work, however, we know that program $P'$ is basically the symbolic execution of program $P$, so we cannot make the fuzzing complete black-box.
We may utilize some additional information, for example, the path condition on the execution path $T(q)$.

\subsubsection{Workflow}

The evolutionary algorithm in this work uses the workflow below:

\begin{itemize}
    \item Randomly generate initial inputs and validate their costs;
    \item Repeat the following operations until termination conditions are met (program runtime or iteration count reaches limits):
    \begin{itemize}
        \item Generate offspring using evolutionary algorithm strategies (mutation and crossover operations) based on the current population;
        \item Validate the satisfiability and cost for each offspring;
        \item Select offspring for the next generation population using evolutionary selection strategies.
    \end{itemize}
\end{itemize}
    
 The detailed design of the fuzz testing process corresponds to the configuration of evolutionary algorithm components.
 In the rest of this section, we will provide details on the specific designs of mutation, crossover, and selection operations in our methodology.

\subsubsection{Population and Individual}

In this work, an individual is a path string $q \in \{0,1\}^N$.
The performance of an individual is defined as:

$$
\mathrm{perf}(q) := \begin{cases}
    \mathrm{cost}(q) & \text{if } \mathrm{sat}(q), \\
    -1 & \text{if } \neg\mathrm{sat}(q)
    \end{cases}
$$

It should be noted that this transforms the constrained optimization problem back to a non-constrained optimization problem.

The size of the population is defined by the hyper-parameter $\texttt{psize}$. The initial population is generated by two strategies:

\begin{itemize}
    \item Generate a path string where each bit is assigned $0$ or $1$ at equal probability.
    \item Generate a random input $\alpha \in I_P$, find a related path string $q \in Q(\alpha)$.
\end{itemize}

The second strategy ensures that the generated path string is satisfiable. 
However, it requires more running time and may be biased by the program features (for example, all inputs sampled are related to the same path string).
Considering these facts, we utilize both strategies and let each generate half of the initial population.

\subsubsection{Mutation}

The mutation operation can be considered a randomized function that transforms an individual to another: $\mathrm{mut}(q_1) = q$.
It provides the ability of local search to the evolutionary algorithm.

In this work, we use two different types of mutation operation:

\begin{itemize}
    \item Choose a random bit in $q_{1[1...m]}$, flip that bit to get $q$.
    \item Choose a random bit in $q_{1[1...m]}$, flip that bit, and randomize all bits after it to get $q$.
\end{itemize}

Here, $m$ refers to the last bit that is ever used (so that the program terminates after the symbolic execution requires the $(m+1)$-th bit). Modification of bits after the $m$-th results in an equivalent path string, which is useless.

\subsubsection{Cross-Over}

The crossover operation can be considered a randomized function that takes multiple individuals (usually two) as input and outputs another.
With this operation combining properties of different individuals, the new individuals generated are expected to achieve higher performance.

In this work, we use three different cross over operations $\mathrm{cross}(q_1, q_2) = q$:

\begin{itemize}
    \item Replace a suffix of $q_1$ with a suffix of $q_2$.
    \item Replace a substring of $q_1$ with a substring of $q_2$.
    \item Insert a substring of $q_2$ into a position in $q_1$.
\end{itemize}

All positions are randomly selected between $1 \dots m_1, 1 \dots m_2$, for the same reason in the mutation operation, ensuring that the suffix or substring selected are used in symbolic execution.

\subsubsection{Selection}

In each iteration, a number of mutation and crossover operations generate the next generation of individuals.
A selection strategy selects a certain number of individuals from the last generation and all individuals generated in this iteration to form the next generation, while the rest of them are killed.
To make the method more flexible, reducing the probability of getting stuck in local optima, we use the following strategy:

\begin{itemize}
    \item $R_1 \cdot \texttt{psize}$ individuals are randomly selected from the last generation. Specifically, the individual with the highest performance is always selected.
    \item For all of the generated individuals, the top $R_2 \cdot \texttt{psize}$ are selected.
    \item For the rest of them, randomly select $(1- R_1-R_2)\cdot \texttt{psize}$ individuals. For an individual with performance ranking $i$ and crowdingness factor $j$, its weight in the random selection is $i^{-\beta}j^{-\gamma}$.
\end{itemize}

The concept of using the crowdingness factor is inspired by a well-known multi-objective evolutionary algorithm, NSGA-II~\cite{nsga2}.
In this work, the crowdingness factor is defined as follows:

\begin{definition}[The Crowdingness of Individual]
    For an individual $q$ in the current set of generated individuals $Q$, its crowdingness is:
    $$
    \mathrm{crowd}(q) := \sum_{q'\in Q, q' \ne q} \mathrm{sim}(q, q') ,
    $$
    where $\mathrm{sim}(q, q')$ is defined as the normalized value of the longest common prefix between $q, q'$:
    $$
    \mathrm{sim}(q, q') := (|q| \cdot |q'|)^{-1/2} \max_{0\le i \le \min \{|q|,|q'|\}\}} i \ \mathrm{ s.t.} \forall j\le i, q_j = q_j' .
    $$
\end{definition}

Here $R_1, R_2, (1 - R_1 -R_2) \in [0, 1], \beta,\gamma \in (0, +\infty)$ are hyper-parameters.
Increasing $R_1$ keeps more individuals unchanged between generations, allowing old individuals to survive multiple iterations.
Increasing $R_2$ or $\beta$, or decreasing $\gamma$ makes the convergence slower while lowering the risk of getting stuck in local optima.

%

\subsection{Optimization}
\label{sec:optimization}


The base algorithm of \pfuzz{} may yield a drastic performance difference across different programs.
The main reason for this performance difference lies in the proportion of satisfiable path strings $\mathrm{sat}(q)$ among all path strings.
For certain programs, most or even all paths $q$ satisfy the constraints, causing the transformed optimization problem to degenerate from a constrained optimization problem to a (near) unconstrained one, which can be efficiently solved by evolutionary algorithms.
However, for other programs where a high proportion of unsatisfiable path strings, most individuals generated through mutation or crossover in evolutionary algorithms should be discarded. 
We illustrate this phenomenon using the QuickSort program. The previous QuickSort program (shown in \cref{alg:AOA}) is indeed a program that could be easily analyzed by \pfuzz{}, but what if we make a tiny modification, by initializing array $mid$ with empty array instead of an array containing the pivot, and start the loop from $0$ instead of $1$:

\begin{algorithm}[!h]
    \caption{QuickSort (Simple Impl., Modified) (N)}
    \renewcommand{\algorithmicrequire}{\textbf{Input:}}
    \renewcommand{\algorithmicensure}{\textbf{Output:}}
    \begin{algorithmic}[1]
        \REQUIRE $N, A: \texttt{Array}[\texttt{int}[1, N]] \ \mathrm{where} |A| = N$

        \IF{$N \le 1$}
            \RETURN $A$
        \ENDIF

        \STATE AddCost($N$)

        \STATE $pivot \leftarrow A[0]$
        \STATE $left, mid, right \leftarrow [], [], []$

        \FOR{$i \in 0 \dots N-1$}
            \IF{$A[i] < pivot$}
                \STATE $left \leftarrow left + [A[i]]$
            \ELSIF{$A[i] = pivot$}
                \STATE $mid \leftarrow mid + [A[i]]$
            \ELSE
                \STATE $right \leftarrow right + [A[i]]$
            \ENDIF
        \ENDFOR

        \RETURN Quicksort($|left|, left$) $+ mid +$ Quicksort($|right|, right$)
        
    \end{algorithmic}
\end{algorithm}

It's obvious that the behavior of the program is equivalent to the original version. The new program costs a tiny amount of extra time (two comparisons for each recursive function call), but a waste of resources at this scale is often ignored in real-world programming. However, for \pfuzz{}, the modified version is much more difficult to analyze than the original one. What's wrong?

In the above program, during each recursive call to QuickSort, when executing the first iteration of the For loop (where $i=0$), the two conditional checks in the loop become $A[0]<A[0]$ and $A[0]=A[0]$, which respectively evaluate to constant values $\texttt{False}$ and $\texttt{True}$. 
This means that during symbolic execution along path string $q$, the next two bits of $q$ must be $\texttt{"01"}$ when reaching this point; otherwise, $q$ becomes unsatisfiable.
Since the program makes recursive calls, the For loop may execute up to $N-1$ times during QuickSort operations on arrays of length $N$, with each call constraining two bits in path $q$.
Consequently, only $1/2^{2N-2}$ of all possible paths satisfy the constraints.
When $N$ is large, even using mutation and crossover strategies in evolutionary algorithms is likely to generate numerous unsatisfiable paths.
However, the original version of QuickSort does not have the problem. In fact, there is no unsatisfiable path in the original version of QuickSort, and the optimization problem degenerates to an unconstrained one.

Similar situations also occur in path conditions composed solely of concrete values.
For instance, if we implement the $\mathbf{for}$ loop in QuickSort using C-style syntax: $\mathbf{for}$($i=0$;$i<N$;$i++$), then during a single function call, the $i<N$ condition must evaluate to \texttt{True} for the first $N$ checks and False at the $N+1$-th check.
This occurs because variables $i$ and $N$ always hold concrete values in this program (evidently, worst-case analysis here does not require considering different values of $N$).
Thus, the expression $i<N$ always evaluates to concrete Boolean values, which are treated as constants by symbolic executors.

The above observations inform us that to run \pfuzz{} efficiently, the program must be well-organized enough to avoid any unnecessary conditional branch. This is definitely unacceptable for a WCA solver looking forward to working on real-world programs. The optimization below must be applied to make the method complete and applicable.

\subsubsection{Ignore Unsatisfiable Branches}

Consider the case where the program executes through some execution path and is currently at the end of statement $t_i$.
The statement is a branch statement, and its path condition on either side can not be evaluated to a concrete value. 
However, if we take one of the branches, the current path condition may become unsatisfiable. 
If so, we can simply take the other branch without reading the next bit of $q$.
With this idea, we provide the following definition:

\begin{definition}[Path String Mapping Ignoring Unsatisfiable Branches]
     For program $P$, the path string mapping ignoring unsatisfiable branches $T_u(q) = t_1, t_2, \dots, t_l$ is defined as follows:
     
    \begin{itemize}
        \item $t_1 = c_1$.
        \item Iteratively, $\forall i \ge 2:$
        \begin{itemize}
            \item If $|\mathrm{OUT}(t_{i-1})| = 0$, then $t_i = \mathrm{NULL}$ .(No such statement, the program terminated after statement $i-1$.)
            \item If $|\mathrm{OUT}(t_{i-1})| = 1$, then $t_i = \mathrm{OUT}(t_{i-1})_1$
            \item If $|\mathrm{OUT}(t_{i-1})| = 2$, then:
            \begin{itemize}
                \item If adding the path condition of $\mathrm{OUT}(t_{i-1})_1$ to the current set of path conditions would make it unsatisfiable, then $t_i = \mathrm{OUT}(t_{i-1})_2$.
                \item Otherwise, if adding the path condition of $\mathrm{OUT}(t_{i-1})_2$ to the current set of path conditions would make it unsatisfiable, then $t_i = \mathrm{OUT}(t_{i-1})_1$.
                \item Otherwise,
                $$
                t_i = \begin{cases}
                \mathrm{OUT}(t_{i-1})_1 & \text{if } q_{x} = 0, \\
                \mathrm{OUT}(t_{i-1})_2 & \text{if } q_{x} = 1
                \end{cases}
                $$
                Where $x$ is the next unused bit in $q$. (Increment $x$ by 1 in this case.)
            \end{itemize}
            
        \end{itemize}
    \end{itemize}
    
\end{definition}

This optimization, in fact, completely eliminates unsatisfied path strings. We can prove the following theorem:

\begin{theorem}[Complete Satisfiability]\label[theorem]{thm:complete-sat}
     For program $P$, using the path string mapping ignoring unsatisfiable branches $T_u(q)$, for all path string $q$, we have $\mathrm{sat}(q)$.
\end{theorem}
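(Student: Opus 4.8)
The plan is to prove the statement by induction on the prefix of the path $T_u(q) = t_1, t_2, \ldots$, maintaining the invariant that the conjunction of all branch conditions accumulated along the prefix is satisfiable. Concretely, let $\phi_i$ denote the conjunction of the path conditions imposed by the choices $t_1, \ldots, t_i$, with $\phi_1$ the empty conjunction (no branch has yet been taken at the entry statement $c_1$). Recalling that $\mathrm{sat}(q)$ holds exactly when the path conditions along $T_u(q)$ are jointly satisfiable—which, by the standard symbolic-execution correspondence, is equivalent to the existence of an input $\alpha$ related to $T_u(q)$—it suffices to show that $\phi_i$ is satisfiable for every $i$, and in particular for the terminal index $l$.

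For the base case, $\phi_1$ is the empty conjunction and is trivially satisfiable. For the inductive step I would assume $\phi_{i-1}$ is satisfiable and examine how $T_u$ selects $t_i$ from $t_{i-1}$. If $|\mathrm{OUT}(t_{i-1})| \le 1$ then no new branch condition is imposed, so $\phi_i = \phi_{i-1}$ and the invariant is preserved immediately. The only interesting case is a genuine branch, $|\mathrm{OUT}(t_{i-1})| = 2$. Here the two outgoing edges carry complementary conditions: taking $\mathrm{OUT}(t_{i-1})_1$ adds some predicate $\psi$, while taking $\mathrm{OUT}(t_{i-1})_2$ adds $\neg\psi$. This complementarity is the engine of the proof: since $\phi_{i-1} \equiv (\phi_{i-1}\wedge\psi) \vee (\phi_{i-1}\wedge\neg\psi)$ and a disjunction is satisfiable only if one of its disjuncts is, the hypothesis that $\phi_{i-1}$ is satisfiable forces at least one of $\phi_{i-1}\wedge\psi$ and $\phi_{i-1}\wedge\neg\psi$ to be satisfiable.

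With this key lemma in hand I would close the step by a case analysis mirroring the three cases in the definition of $T_u$. If adding $\psi$ is unsatisfiable, the lemma guarantees adding $\neg\psi$ is satisfiable, and $T_u$ indeed selects $\mathrm{OUT}(t_{i-1})_2$, so $\phi_i = \phi_{i-1}\wedge\neg\psi$ is satisfiable; the symmetric case (adding $\neg\psi$ unsatisfiable) is identical; and in the remaining case both extensions are satisfiable, so whichever branch the bit $q_x$ dictates leaves $\phi_i$ satisfiable. In every case the invariant survives, completing the induction. At the terminal statement $t_l$ the full conjunction $\phi_l$ is satisfiable, so there is an input $\alpha$ consistent with all path conditions; by definition this $\alpha$ is related to $T_u(q)$, whence $\mathrm{sat}(q)$.

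I expect the main obstacle to be rigorously justifying the branch-complementarity property and its consequence—that the two edges of a two-way branch contribute a predicate and its exact negation, which is what makes $\psi \vee \neg\psi$ a tautology and drives the ``at least one branch stays satisfiable'' lemma. A secondary point I would address is the well-definedness and finiteness of $T_u(q)$: that the construction never runs out of bits and that the generated path terminates. I would handle this by appealing to the length threshold $M$ from \cref{thm:complete-sat}'s predecessor (the Path String Length Threshold theorem) together with the standing assumption that $P$ terminates on every input, so that each satisfiable prefix corresponds to an actual finite execution and the index $l$ is well-defined.
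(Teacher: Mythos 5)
Your proof is correct and takes essentially the same route as the paper's: induction on the prefix length, maintaining the invariant that the conjunction of path conditions along the $i$-prefix of $T_u(q)$ is satisfiable, with a case split on $|\mathrm{OUT}(t_{i-1})|$. The only difference is that you spell out what the paper's terse two-way-branch case leaves implicit---the complementarity argument $\phi_{i-1} \equiv (\phi_{i-1}\wedge\psi)\vee(\phi_{i-1}\wedge\neg\psi)$ showing at least one branch extension remains satisfiable, and the well-definedness of $T_u(q)$ via the length threshold $M$---which strengthens rather than changes the argument.
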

\begin{proof}
    We inductively prove the following theorem: For $0 \le i \le l$ where $l = |T_u(q)|$, the conjunction of all path conditions in the $i$-prefix of $T_u(q)$ is satisfiable. Induction on $i$.
    \begin{itemize}
        \item $i = 0$. The conjunction of the empty set is $\texttt{True}$, which is satisfiable.
        \item $i > 0$. By induction, the path conditions in the $i-1$-prefix are satisfiable. Consider $\mathrm{OUT}(t_{i-1})$:
        
        \begin{itemize}
            \item $|\mathrm{OUT}(t_{i-1})| = 1$. As it's a non-conditional jump, the path condition at $t_i$ is $\texttt{True}$; thus, adding it to the set of the path condition of the first $i-1$ statements (already satisfiable) is also satisfiable.
            
            \item $|\mathrm{OUT}(t_{i-1})| = 2$. By definition of $T_u(q)$, if any branch of $t_i$ makes the set of path condition unsatisfiable, it will select the other branch.
            
        \end{itemize}
    \end{itemize}
\end{proof}

With this optimization, the problem that we need to solve degenerates from a constrained optimization problem to an unconstrained one, making every new path explored more useful than the previous methods.
However, it should be noted that this optimization is not free and requires much extra time budget from \pfuzz{} itself. It requires a call to the SMT solver at every branch where the path condition is a symbolic value, which is expensive in large programs.

\subsubsection{Skip Satisfiability Check}

Based on the above optimization, we then focus on reducing calls to the SMT solver. We notice that in many programs, some branches satisfy: for all possible inputs and control flows that reach this branch, the path conditions for both sides are always satisfiable.

The above observation provides an intuition to an optimization. If we can identify these "always satisfiable" branches, then all calls to the SMT solver could be skipped at these branches. With this optimization, the number of calls to the SMT solver could be reduced by a portion (depending on the program itself), reducing time cost evaluating each path, improving efficiency of the method.

In extreme cases, a program could consists of purely two types of branches easy for analysis: a branch either has a concrete path condition (e.g. $\texttt{True}$ or $\texttt{False}$, thus satisfiability of both sides could be checked without SMT solver), or has a symbolic path condition that both sides is satisfiable. If so, all calls to the SMT solver can be eliminated, even including checking the satisfiability of the entire path $q$ at the end of symbolic execution.

Then, how do we identify such a type of branch? This could be done with one of the following methods:

\begin{itemize}
    \item Manual Instrumentation. For simplicity, we can put special instrumentation in the programs to infer \pfuzz{} that satisfiability checks could be skipped at some branches.
    \item Testing. For real-world programs, it's inefficient to manually analyze the satisfiability at each branch.
\end{itemize}

\section{Evaluation Setup}
\label{sec:evaluation-setup}

In this section, we will introduce our evaluation setup, both theoretically (the research questions) and practically (how we implemented the method).

\subsection{Research Questions}

Before evaluation, we propose the research questions below.

\paragraph{RQ1} Is \pfuzz{} effective and efficient in solving WCA problems on general programs?

\paragraph{RQ2} What is the advantage of \pfuzz{}? On what type of programs could \pfuzz{} exhibit exceptional efficiency?

\paragraph{RQ3} How effective is \pfuzz{} compared to classic methods based on fuzzing or symbolic execution?

\subsection{Implementation}

\begin{table*}
    \caption{\textbf{All Benchmark Programs}}\label{tab:benchmarks}
    \centering
    \begin{tabular}{cccc}
    \toprule
    ID & Test Program & Description & Input Scale \\
    \midrule 
    1-1 & InsertionSort & Insertion sort algorithm & $N=128$ \\
    1-2 & QuickSort & Simple implementation of QuickSort & $N=128$ \\
    1-3 & HeapInsertion & Inserting elements into a min-heap & $N=128$ \\
    1-4 & Dijkstra & Single-source shortest path algorithm & $N=8$ \\
    1-5 & BSTInsertion & Inserting elements into a binary search tree &$N=128$ \\
    1-6 & BellmanFord & Single-source shortest path algorithm & $N=8$ \\
    1-7 & BellmanFordQueue & Queue-optimized Bellman-Ford algorithm & $N=8$ \\
    1-8 & HashTable & Inserting elements into a hash table & $N=8,P=13$ \\
    \midrule 
    2-1 & InsertionSort' & Badger's version for InsertionSort & $N=64$ \\
    2-2 & QuickSortJDK & QuickSort algorithm from JDK 1.5 & $N=64$ \\
    2-3 & HashTable' & Realistic hash table implementation & $N=64$ \\
    2-4 & ImageProcessor & Image processing functions & $SX,SY=2$\\
    2-5 & SmartContract & Smart contract for cryptocurrency &  $N=50$\\
    \midrule 
    3-1 & IsPalindrome & Checks if a string is palindrome & $N=100$ \\
    3-2 & IsPalindrome' & Optimized version of IsPalindrome & $N=100$ \\
    3-3 & MemoryFill & Copy non-zero elements between lists & $N=100$ \\
    3-4 & Alternate0 & Cost maximized when inputs alternate 0/non-0 & $N=100$ \\
    3-5 & DFS & Depth-first search on n-node graph & $N=10$ \\
    3-6 & BFS & Breadth-first search on n-node graph & $N=10$ \\
    \bottomrule
    \end{tabular}
\end{table*}

\paragraph{\pfuzz{}}
To conduct experiments and test our method, we have implemented a simple (simulated) symbolic execution engine in Python.
We write programs as Python generator objects to communicate explicitly with the symbolic executor.
The program iteratively yields the path conditions when a branch is encountered and receives the selected concrete path from the symbolic executor.
The program also yields cost information to let the symbolic executor accumulate its resource consumption.
 
Besides the symbolic executor, we have also implemented the evolutionary algorithm designed in the above section in Python.
This algorithm calls the symbolic executor to compute the cost of path strings, i.e., the performance of individuals.

When we combine them, we get a full implementation of \pfuzz{}.
Given a program, along with its resource definition and input space, \pfuzz{} will take some time and return a set of inputs believed to be the worst case (or an under-approximation) along with its resource consumption.

\paragraph{Methods for Comparison} We have also implemented some classic methods in the same environment:

\begin{itemize}
    \item Fuzzing: an evolutionary-algorithm based fuzzing method similar to the implementation of KelinciWCA, a WCA-specialized version of Kelinci, based on AFL~\citep{MISC:afl}, described in Badger~\cite{ISSTA:NKP18}.
    \item SymExe: a search-based symbolic execution method. The search strategy is based on the strategy of JPF-Symbc\cite{spf}, also described in Badger.
\end{itemize}

\subsection{Benchmark Programs}

All benchmark programs used in the experiments are listed in Table \ref{tab:benchmarks}.
The benchmark programs of our experiments consist of the follows:

\begin{itemize}
    \item 9 programs (1-1 $\sim$ 1-9) selected by ourselves. All of them are classic textbook algorithms.
    \item 5 programs (2-1 $\sim$ 2-5) ported from the experiments of Badger.
    \item 6 programs (3-1 $\sim$ 3-6) ported from the experiments of ESE.
\end{itemize}

Programs suffixed with a prime have the same algorithm with the respective non-suffixed programs, but are alternate implementations.

The set of benchmark programs has the following properties:
\begin{itemize}
    \item SmartContract is a program with symbolic cost. The other 19 programs has concrete cost.
    \item QuickSort', HashMapInsertion', and ImageProcessor are programs with a relatively large scale ($\sim 100$ lines of code). The other 17 programs are relatively small ($<50$ lines of code).
\end{itemize}

\section{Evaluation Result}
\label{sec:evaluation-result}

The hardware and software environment of the experiment is as follows:

\begin{itemize}
    \item CPU: 12th Gen Intel(R) Core(TM) i9-12900H   2.50 GHz
    \item Memory: 32 GB
    \item OS: Windows 10
    \item Python 3.10.11, z3-solver 4.12.2.0
\end{itemize}

For each benchmark program, we run all three methods at least four times and record their average performance. Each time, we run the method for at most 120 minutes.

The evaluation results on all 19 benchmark programs are shown in Table \ref{tab:eval_results}. The numbers represent the cost of the worst-case input found. Bold numbers mark the methods that perform the best in each benchmark program. The minimum time to find the worst case is compared and listed as subscripts if multiple methods find the same maximum cost (usually the case that all of them have found the theoretical maximum cost).

The results show that among all 19 benchmark programs, \pfuzz{} has outperformed fuzzing in 13 programs, symbolic execution in 17 programs, and both of the two classical methods in 12 programs.

Figure \ref{fig:exp} shows experiment results on some of the benchmark programs, which are representative of \pfuzz{}'s advantages. Figures for other benchmark programs can be found in supplementary materials.

\begin{table*}
    \caption{\textbf{Benchmark programs. The statistics show the resource costs of the found worst-case inputs by the methods.}}
    \label{tab:eval_results}
    \centering
    \begin{tabular}{cccc|cc}
    \hline
    ID & Test Program & \pfuzz{} (cost) & Path String (length) & Fuzzing (cost) & SymExe (cost) \\
    \hline 
    1-1 & InsertionSort & $7356$ & 16266 & $\textbf{7945}$ & $1984$  \\
    1-2 & QuickSort & $\textbf{8255}$ & 165110 & $4842$ & $954$ \\
    1-3 & HeapInsertion & $\textbf{649}$ & 906 & $585$ & $276$  \\
    1-4 & Dijkstra & $\textbf{28.0}$ & 1034 & $27.8$ & $25.3$ \\
    1-5 & BSTInsertion & $1355$ & 16394 & $\textbf{5976}$ & $1529$ \\
    1-6 & BellmanFord & $\textbf{485}$ & 1034 & $482$ & $384$  \\
    1-7 & BellmanFordQueue &$\textbf{27.6}$ & 1034 & $27.0$ & $17.8$ \\
    1-8 & HashTable & $28.0_{80}$ & 142 & $\textbf{28.0}_{0.16}$ & $24.5$ \\
    \hline 
    2-1 & InsertionSort' & $\textbf{4222}$ & 4042 & $4189$ & $1449$  \\
    2-2 & QuickSortJDK & $1018$ & 32778 & $\textbf{1262}$ & $942$ \\
    2-3 & HashTable' & $4348$ & 586 & $\textbf{4991}$ & $2988$  \\
    2-4 & ImageProcessor & $\textbf{36211}$ & 200 & $33238$ & $2109$ \\
    2-5 & SmartContract & $21474938_3$ & 100 & $\textbf{21474938}_{0.04}$ & $21474938_4$ \\
    \hline 
    3-1 & IsPalindrome & $100_{0.08}$ & 210 & $100_{25}$ & $\textbf{100}_{0.04}$  \\
    3-2 & IsPalindrome' &$\textbf{50}_{0.019}$ & 210 & $50_{24}$ & $50_{0.022}$  \\
    3-3 & MemoryFill & $\textbf{100}_{0.04}$ & 210 & $100_{0.35}$ & $21$  \\
    3-4 & Alternate0 & $\textbf{600}_{0.32}$ & 210 & $600_{13}$ & $409$  \\
    3-5 & DFS & $\textbf{100}$ & 130 & $99$ & $33$ \\
    3-6 & BFS & $\textbf{100}$ & 130 & $97$ & $33$ \\
    \hline
    \end{tabular}
\end{table*}

\begin{figure*}[h!]
    \centering
    \hfil
    \begin{subfigure}[b]{0.36\linewidth}
        \includegraphics[width=\linewidth]{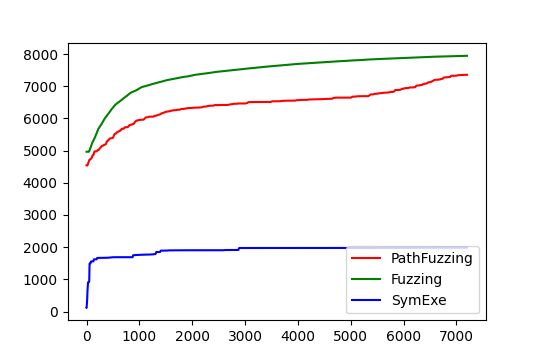}
        \caption{1-1-InsertionSort}
    \end{subfigure}
    \hfil
    \begin{subfigure}[b]{0.36\linewidth}
        \includegraphics[width=\linewidth]{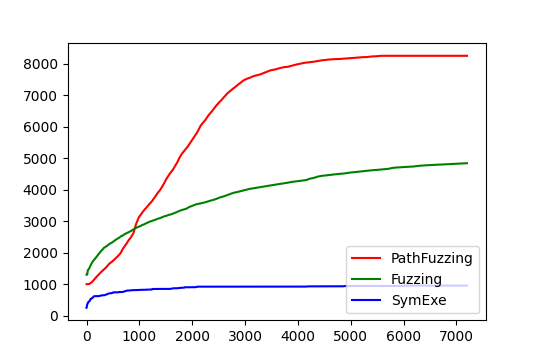}
        \caption{1-2-QuickSort}
    \end{subfigure}
    \hfil
    \\
    \hfil
    \begin{subfigure}[b]{0.36\linewidth}
        \includegraphics[width=\linewidth]{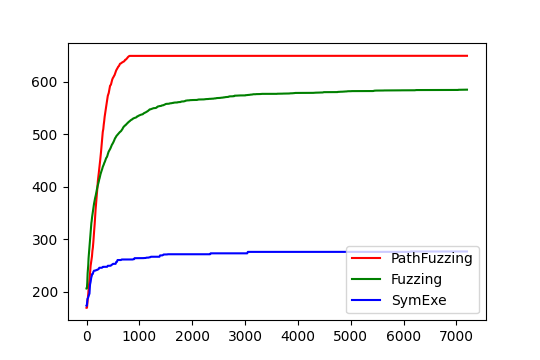}
        \caption{1-3-HeapInsertion}
    \end{subfigure}
    \hfil
    \begin{subfigure}[b]{0.36\linewidth}
        \includegraphics[width=\linewidth]{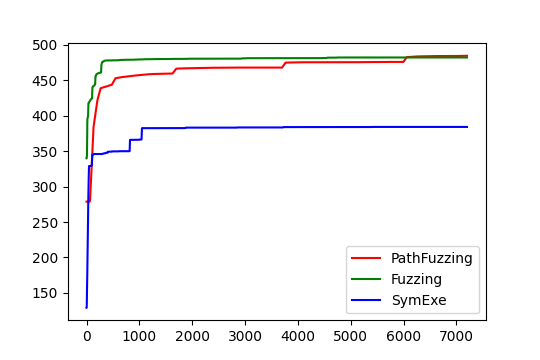}
        \caption{1-6-BellmanFord}
    \end{subfigure}
    \hfil
    \\
    \hfil
    \begin{subfigure}[b]{0.36\linewidth}
        \includegraphics[width=\linewidth]{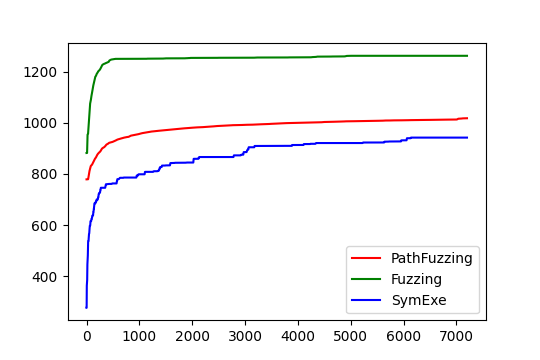}
        \caption{2-2-QuickSortJDK}
    \end{subfigure}
    \hfil
    \begin{subfigure}[b]{0.36\linewidth}
        \includegraphics[width=\linewidth]{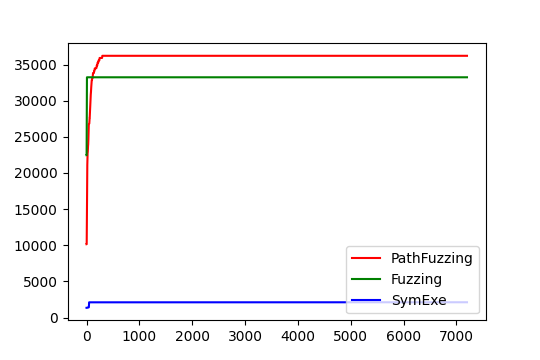}
        \caption{2-4-ImageProcessor}
    \end{subfigure}
    \hfil
    \caption{Results on a subset of benchmark programs.}
    \label{fig:exp}
\end{figure*}

\section{Discussion}
\label{sec:discussion}

\paragraph{RQ1} The experimental results demonstrate that \pfuzz{} achieves certain efficiency and effectiveness in solving WCA problems. 

On all 19 programs, \pfuzz{} is effective: inputs that significantly slow down programs have been found, even for the most difficult WCA problem. \pfuzz{} has overcome one of the weaknesses of search-based symbolic execution, which sometimes has nearly no effects on complex programs and large-scale inputs.

Specifically, on some of the programs, \pfuzz{} quickly found near-optimal worst case input in 10 minutes, suggesting potential usability in real-world software analysis tasks. 

\paragraph{RQ2} Experiment results show that \pfuzz{} has the most outstanding performance on small-scale programs with large-scale inputs. Evolutionary algorithms are believed to rely on mass individuals and thus require fast evaluation for each individual. Evaluating complex path conditions requires intolerable time calling SMT solver. 

The proportion of branches whose satisfiability check could be skipped is a key factor to the efficiency of \pfuzz{}, as it also greatly impacts the number of calls to SMT solver required. 

For programs with small-scale inputs, \pfuzz{} would also work well, but common symbolic execution based methods may have time enumerating all paths and perform even better.

\paragraph{RQ3} On our benchmark programs, \pfuzz{} has a slight advantage compared to common fuzzing techniques and has a significant advantage compared to search-based symbolic execution techniques.

Through experiments, we have confirmed that \pfuzz{} has significantly reduced the input space on some programs. Take IsPalindrome' as an example; this program has an input space of size $256^{100}$ but only has $50$ distinct paths. This makes \pfuzz{} perform better than fuzzing in some of the programs.

Compared to symbolic execution, we believe that \pfuzz{} inherits the balance between global and local search, which is considered an advantage of the evolutionary algorithm. The input scale of our benchmark programs is a little bit too large for search-based methods.

\paragraph{Threats to Validity}

Our study contains several potential threats to validity.
First of all, although we have selected a wide range of benchmark programs, this may be insufficient to cover different types of programs. An obvious problem is that we lack true real-world scale programs (like thousands of lines).
In implementing our method, we utilized some extra manually generated information. The path length $M$ is taken from user input, and whether a sat check could be skipped in a branch is decided by manual instrumentation in the programs. Although we have demonstrated that these can be replaced by automatic processes, this difference may slightly impact performance.
As for the experiment, the performance of each method may be sensitive to the parameters. Most of the parameters remain constant across experiments; however, for \pfuzz{}, we sometimes use different values of population size according to program specialties.

\section{Related Work}
\label{sec:related}



\paragraph{WCA by Combination of Fuzzing and Symbolic Execution}
Many previous works have explored the combination of these two methods.
Badger \citep{ISSTA:NKP18} is a foundational work, which runs a symbolic execution engine and a fuzzing engine in parallel while continuously exchanging inputs. The experiments have been done on several programs in Java, using KelinciWCA (based on AFL~\cite{MISC:afl}) and Symbolic Pathfinder~\cite{spf}.
ESE \citep{ISSRE:ADS18} runs an evolutionary algorithm on program paths defined by sequences of path conditions. Symbolic execution is then applied to select a concrete input that satisfies the newly generated path condition sequence.
CSEFuzz \citep{xie2020csefuzz} uses symbolic execution to generate a set of inputs with high code coverage as initial inputs for fuzzing.
HyDiff \citep{noller2020hydiff} identifies differences in program behavior under different inputs.
Another work by \citet{gerasimov2018combining} uses symbolic execution to improve code coverage in fuzzing.

\paragraph{WCA by Fuzzing}
The concept of fuzzing was first introduced by \citet{miller1990empirical} in 1990.
Early applications such as AFL \citep{MISC:afl}, LibFuzzer \citep{MISC:libfuzzer}, and Radamsa \citep{MISC:radamsa} primarily focused on identifying software defects and vulnerabilities.
In recent years, many methods have been proposed for solving worst-case analysis using fuzzing.
General approaches include WISE \citep{ICSE:BJS09} by \citet{ICSE:BJS09} and SlowFuzz \citep{CCS:PZK17} by \citet{CCS:PZK17}.
Specialized techniques such as \citet{NDSS:BMA20} perform mutation and crossover operations on Java objects. 
\citet{le2021saffron} requires users to provide the syntax of program inputs and generates only syntax-compliant inputs.
Singularity \citep{wei2018singularity} optimizes input patterns instead of inputs themselves to find worst-case inputs for different data scales.

\paragraph{WCA by Symbolic Execution}
Symbolic execution is a well-established technique dating back to \citet{SAM:Floyd67}'s 1967 paper, originally intended for proving algorithm correctness.
Classic works including KLEE \citep{OSDI:CDE08}, DART \citep{PLDI:GKS05}, and EXE \citep{cadar2008exe}.
In recent years, \citet{liu2022acquirer} first uses static program analysis to identify code fragments potentially causing algorithmic complexity issues, then applied symbolic execution to these fragments; \citet{kebbal2006combining} used control flow graphs to analyze block execution counts analytically, avoiding loop unrolling; \citet{bernat2000approach} used a computational algebra system to avoid subroutine calls; and \citet{liu1998automatic} improved the efficiency of symbolic execution through partial evaluation and incremental computation.
Unlike fuzzing, some symbolic execution works focus solely on theoretical analysis without experimental validation.

\paragraph{Other WCA Methods}
Besides fuzzing and symbolic execution, many other methods can be applied to solve worst-case analysis problems, such as machine learning \citep{kumar2020detection}, surrogate models \citep{mukkamala2003detecting}, and type-system-based approaches \citep{POPL:WH19}.

\section{Conclusion}
\label{sec:conclusion}

In this work, we present \pfuzz{}, a worst-case analysis method combining symbolic execution and fuzzing techniques.
We present the complete framework of the method along with key optimization strategies and conducted experiments based on Python implementations.
Our experimental results demonstrate that the method achieves satisfactory optimization effects for most programs and shows certain advantages over classic fuzzing and symbolic execution techniques.
%




\bibliographystyle{ACM-Reference-Format}
\bibliography{db}


\end{document}